\newcommand{\T}{\mathsf{T}}
\def\calX{\mathcal{X}}
\def\calU{\mathcal{U}}
\newtheorem{theorem}{Theorem}
\newtheorem{lemma}{\bf Lemma}
\def\BibTeX{{\rm B\kern-.05em{\sc i\kern-.025em b}\kern-.08em
    T\kern-.1667em\lower.7ex\hbox{E}\kern-.125emX}}
\begin{document}

\title{\LARGE \bf Canonical Form of Datatic Description in Control Systems}

\author{Guojian Zhan, Ziang Zheng, Shengbo Eben Li
\thanks{*This work is supported by NSF China with U20A20334 and Tsinghua University Initiative Scientific Research Program. It is also supported by Tsinghua University-Toyota Joint Research Center for AI Technology of Automated Vehicle. G. Zhan and Z. Zheng contributed equally. All correspondences should be sent to S. E. Li with email: {\tt\small lisb04@gmail.com}.}
\thanks{G. Zhan, Z. Zheng and S. E. Li are with School of Vehicle and Mobility, Tsinghua University, Beijing, 100084, China.}%
}

\maketitle
\begin{abstract}
The design of feedback controllers is undergoing a paradigm shift from modelic (i.e., model-driven) control to datatic (i.e., data-driven) control. 
Canonical form of state space model is an important concept in modelic control systems, exemplified by Jordan form, controllable form and  observable form, whose purpose is to facilitate system analysis and controller synthesis. 
In the realm of datatic control, there is a notable absence in the standardization of data-based system representation. 
This paper for the first time introduces the concept of \textit{canonical data form} for the purpose of achieving more effective design of datatic controllers. In a control system, the data sample in canonical form consists of a \textit{transition} component and an \textit{attribute} component. The former encapsulates the plant dynamics at the sampling time independently, which is a tuple containing three elements: a state, an action and their corresponding next state. The latter describes one or some artificial characteristics of the current sample, whose calculation must be performed in an online manner. The attribute of each sample must adhere to two requirements: (1) causality, ensuring independence from any future samples; and (2) locality, allowing dependence on historical samples but constrained to a finite neighboring set. 
The purpose of adding attribute is to offer some kinds of benefits for controller design in terms of effectiveness and efficiency. 
To provide a more close-up illustration, we present two canonical data forms: temporal form and spatial form, and demonstrate their advantages in reducing instability and enhancing training efficiency in two datatic control systems.

\end{abstract}

\section{Introduction}

The development of control theory has undergone a transformative journey, evolving from classical methods centered around transfer functions in the frequency domain to modern methods relying on state space models in the temporal domain \cite{aastrom2014control}. Modern control theory leverages linear algebra techniques to simplify tasks such as system modeling, structure transformation, modal analysis, and controller synthesis. With developments spanning several decades, there have been many model-based methods of designing feedback controllers including linear quadratic control, H-infinity control, and model predictive control \cite{li2017robust, li2010model}. 
Since World War II, the advent of digital computers has propelled substantial progress in the industrial application of modern control methods, particularly in fields such as satellite navigation, rocket control, and autonomous driving \cite{canuto2008drag, perez2020model, zhan2024transformation}.

The canonical form of state space model is an unavoidable concept in modern control methods, which provides great benefits in understanding system characteristics and designing effective controllers \cite{kalman1962canonical}. The underlying mathematical principle lies in the characteristic invariance of state space representation under the nonsingular transformation on system states \cite{ho1980team}. Consider a linear time-invariant plant described by $\dot{x} = Ax + Bu$, where $x\in \mathbb{R}^n$ and $u\in \mathbb{R}^m$ represent the state and action, and $A$ and $B$ are the system and input matrices, respectively. Through a nonsingular transformation matrix $P\in\mathbb{R}^{n\times n}$ for the transformation $\hat{x}=Px$, the resulting state space model $\dot{\hat{x}} = P^{-1}AP\hat{x} + P^{-1}Bu$ encapsulates the same plant dynamics.
In fact, different canonical forms utilize distinct transformation matrices to yield diverse state space models to facilitate system analysis and controller synthesis \cite{kalman1982computation}. For example, representing a state space model in controllable canonical form or observable canonical form enables a direct assurance of controllability or observability without additional judgment procedures \cite{KH}. In addition, any state space model can be transformed into the Jordan canonical form, wherein the system matrix becomes a diagonal square matrix, and the diagonal elements represent the eigenvalues of the plant \cite{pratzel1983canonical}. Therefore, the stability verification becomes straightforward by examining whether all eigenvalues are negative. Furthermore, different eigenvalues indicate different modalities of the plant, enabling a more targeted design of the controller. 

As plant dynamics become more complex in recent years, the design of feedback controllers is undergoing a paradigm shift from modelic (i.e., model-driven) control to datatic (i.e., data-driven) control. Existing control methods, including both classical and modern versions, are encountering significant challenges due to sophisticated system behaviours and dynamic operating environments. The primary hurdle arises from the difficulty in constructing an accurate yet structurally simple system model. Even when an inaccurate model is attainable, it often assumes a highly intricate mathematical form, which introduces heavy computational burden in system analysis and subsequent controller design.

While modeling a system has become progressively intricate, collecting its behaviour data has considerably simplified due to rapid advance in storage and communication technologies. In recent years, there has been a notable surge in algorithms design for datatic controllers, where the term ``datatic'' underscores an emphasis on solemnly utilizing data samples in the design process. 
One prominent topic in this domain is reinforcement learning \cite{duan2021distributional}: it gathers samples through iterative interaction with the environment, forming a data-driven representation of environmental dynamics. With sufficient data, an optimal policy can be trained using policy iteration or value iteration, ensuring optimality based on the Bellman equation \cite{li2023rlbook}. This trained policy can subsequently be applied online for closed-loop control.
Other examples include data-driven predictive control, iterative linear quadratic control \cite{berberich2020data, fridovich2020efficient}. A crucial distinction between these datatic methods and the previously mentioned modelic methods is their direct dependence on interaction data to describe the plant dynamics, instead of relying on a fitted model obtained through system identification. 

Since datatic control is centered around data, how to collect and store data has a major influence on the effectiveness and efficiency of designing a controller. It is crucial to emphasize the importance of efficiency, particularly in complex systems such as autonomous vehicles and humanoid robotics, where training a useful controller from data often requires considerable time, ranging from tens of hours to days \cite{jiang2023reinforcement, feng2023dense, wurman2022outracing, ju2022transferring}. Consequently, even a modest enhancement in efficiency can result in substantial cost savings and labor force reduction. Therefore, a question naturally arises that whether there could be more effective forms for data representation, which is analogous to the canonical forms of state space models. This inspires us to explore the canonial form of datatic representation, a topic that remains both practically valuable and theoretically important.

This paper for the first time introduces the concept of \textit{canonical data form} into datatic control systems. The canonical data form establishes a standardized framework for datatic representation with the aim of providing benefits to controller design in terms of effectiveness and efficiency.
Our main contributions are summarized as follows.

\begin{enumerate}
    \item The canonical data form is developed for more effective datatic representation of system
    dynamics. Specifically, a data sample in canonical form consists of a \textit{transition} component and an \textit{attribute} component. The former encapsulates the plant dynamics at the sampling time independently. The latter describes one or some artificial characteristics of the current sample.  
    In our framework, different canonical forms can be customized according to specific needs to facilitate the development of datatic controllers.
    \item Two canonical data forms, i.e., temporal form and spatial form, are presented to offer a more close-up illustration. 
    The former uses the consumed time between  two chosen events as the temporal attribute. One can specify the minimum time cost for for each pair of events and fit an event-time distribution to serve as an additional performance measure.
    The latter needs to select a few fixed anchors in the state space, and uses the distances to these anchors as the spatial attribute. One can leverage these distances to search an arbitrary sample in a much faster way. Experiments are conducted to demonstrate their benefits in reducing instability and enhancing efficiency in  datatic controller design.
\end{enumerate}
\section{Datatic control system}

The control paradigm can be broadly categorized into two groups based on how to describe the system dynamics: (1) modelic control and (2) datatic control, as illustrated in Fig. \ref{fig.datatic_and_modelic_comparison} \cite{yang2024stability}. The term ``modelic'' means being driven by or based on or related to models, either transfer function or state space model. The term  ``datatic'' has a corresponding meaning related to data, which is collected by interaction with environment. Collecting data also happens in modelic control paradigm. But in this paradigm, data is utilized to fit a parameterized model through system identification, and  controllers are still synthesized based on models. In contrast, a datatic controller is  directly solved using data, omitting the step of building a model with system identification.

\begin{figure}[!htbp]
\centering
\includegraphics[width=0.48\textwidth,trim={0cm 0.8cm 0cm 0.5cm}, clip]{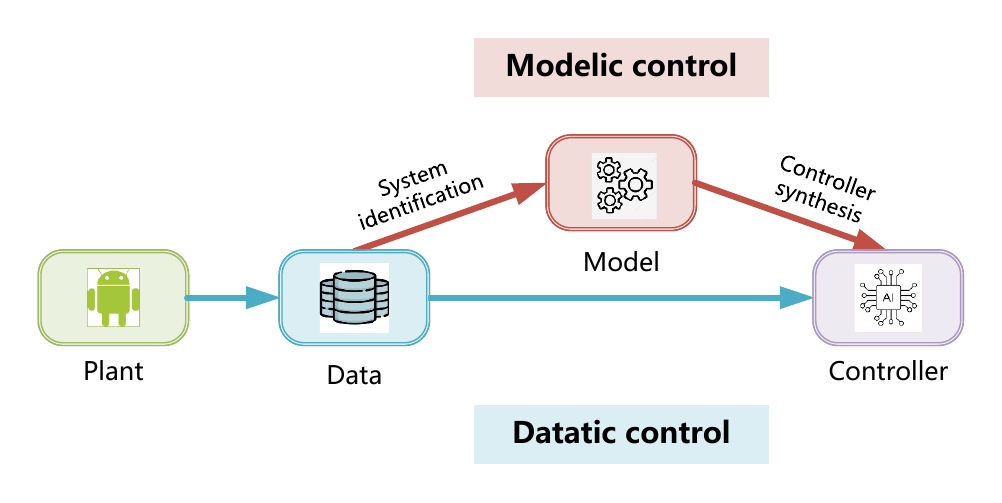}
\caption{Two types of control paradigms. Modelic control (on the upper
path) first fit a model with system identification and then use this model to synthesize controllers.
Datatic control (on the lower path) directly solves controllers using data.}
\label{fig.datatic_and_modelic_comparison}
\end{figure}

Both modelic control and datatic control have their own advantages and disadvantages due to their different system behaviour representations. 
In a modelic control system, it needs to fit the system model with a function of a specific form. The model provides a continuous description of system dynamics, meaning it can generate an output at every point in the state-action space. However, modelic description is susceptible to errors because the true system may not precisely match the assumed function form. In a datatic control system, explicit models are not constructed; instead, data samples are directly utilized to describe the system dynamics. This form of system behavior representation is termed as datatic description. If data samples are abundant enough, they can offer an accurate portrayal of system dynamics, at least at their respective locations, as they originate from the direct measurement of system states. One might question the accuracy of datatic description in the presence of perception errors. Actually, sensors can be considered as part of a closed-loop system, and thus, their errors contribute to the overall system dynamics. Consequently, sensor measurements occur discretely, both temporally and spatially, represented by a limited number of data points. No information is available within the intervals between these data points. The datatic information of a dynamic system must be discrete rather than continuous in the state-action space.

A standard datatic control system includes a set of input and output data collected by interacting with a plant, which is denoted as
\begin{equation}
\label{eq: datatic_control_system}
    \mathcal{D}=\{(x,u,x')_i|1\le i\le N\},
\end{equation}
where $x \in\calX\subseteq\mathbb{R}^n$ is a state, $u \in\calU\subseteq\mathbb{R}^m$ is an action, $x'$ is the next state obtained by applying $u$ on the plant at $x$, and $N$ is the number of data samples.
The dataset $\mathcal{D}$ is a datatic description of a discrete-time plant
\begin{equation}
\label{eq: datatic_dynamics}
    x'=f(x,u).
\end{equation}
where $f:\calX\times\calU\to\calX$ is an unknown system dynamics. The function $f$ is a modelic description of plant
dynamics. If it is known and accurate, we can use it to design controllers. Unfortunately, models are usually inaccurate
or even unknown in many real-world tasks. How to efficiently
use data to design controllers is a central task of datatic control paradigm.
\section{Canonical data form}
This section will first present the definition of canonical data form. Then the temporal form and spatial form will be introduced in detail.

\subsection{Definition of canonical data form}

\begin{figure}[!htbp]
\centering
\includegraphics[width=0.49\textwidth,trim={0cm 0.2cm 0cm 0.0cm}, clip]{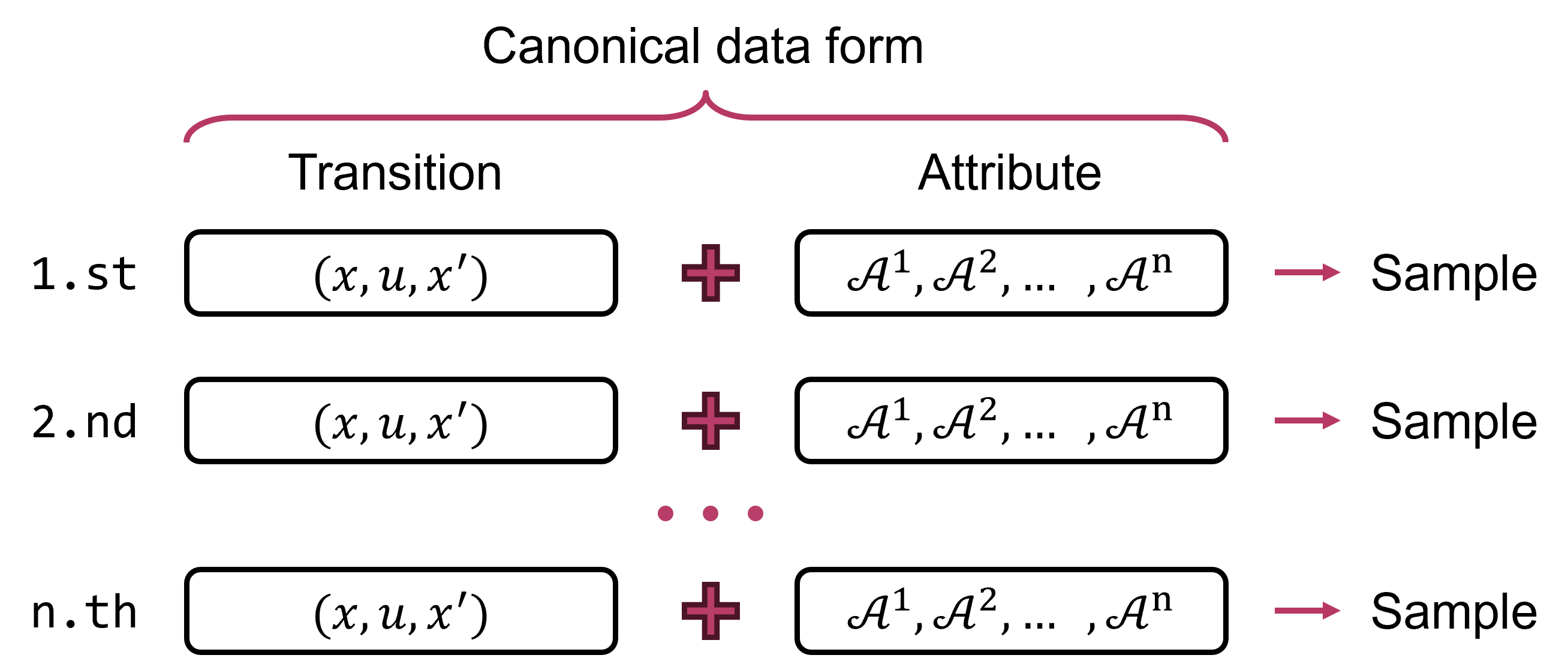}
\caption{Definition of canonical data form.}
\label{fig.datatic_canonical_form_framework}
\end{figure}

The canonical form in modelic control is a well-established concept,  
which involves utilizing a non-singular transformation matrix to convert the state-space model into an agreed-upon format. This form has two features: (1) it preserves the characteristics of plant dynamics ensured by the non-singular transformation; and (2) it introduces additional benefits for controller synthesis by customizing the agreed-upon format. For example, the controllable form defines the input matrix as a unit vector with the first element as 1. Additionally, it sets the subdiagonal elements of the system matrix to 1, allowing non-zero elements only in the last column. When a system model is described in the controllable form, its controllability can be easily checked without additional computation.
The Jordan form customizes the system matrix as a diagonal square matrix, where the diagonal elements indicate the system's eigenvalues. With this form, assessing stability is simplified by checking whether all eigenvalues are negative. Moreover, distinct eigenvalues indicate diverse system modalities, allowing for a more targeted controller design.

In datatic control, the canonical form is defined as an agreed-upon representation of data. To match its modelic counterpart, the canonical data form also have two features. One is to keep the characteristics of plant dynamics. 
The other is to offer some kinds of benefits for
controller design in terms of effectiveness and efficiency.
Therefore, we define a data sample with canonical form as 
\begin{equation}
    \text{Sample } S = \text{Transition } \mathcal{T} + \text{Attribute } \mathcal{A}.
\end{equation}
Here, $\mathcal{T}$ serves as a repository for preserving the system information. Specifically, it contains three elements: a state, an action and their corresponding next state, i.e., $\mathcal{T} = (x, u, x')$, which encapsulates the plant dynamics $f$ at each sampling time independently. 
The notation $\mathcal{A}$
describes one or some artificial characteristics of the current sample, whose calculation must be performed in an online manner. The two elements of each sample are illustrated in Fig. \ref{fig.datatic_canonical_form_framework}.
The attribute of each sample must adhere to two requirements:
(1) \textit{causality}, ensuring independence from any future samples;
and (2) \textit{locality}, allowing dependence on historical samples
but constrained to a finite neighboring set. 

Consider a standard sampling process that generates a trajectory denoted as $\mathcal{\tau} = \{x_1, u_1, ..., x_{T-1}, u_{T-1}, x_{T}\}$. This raw trajectory does not adhere to the template of canonical data form, and its canonical standardization can be performed at each sample in an online manner. Specifically, when the three variables $x_i, u_i, x_{i+1}$ are all known, the transition of the $i$-th sample can be easily built as $\mathcal{T}_i = (x_i, u_i, x_{i+1})$. 
Regarding its attribute, simultaneous computation is necessary during the sampling process. On one hand, causality must be satisfied, i.e., the attribute calculation must be independent of subsequent samples because future samples are unpredictable. On the other hand, locality must be satisfied, i.e., the attribute calculation should only be linked to a locally finite number of historical samples in order to avoid large computational burden. To give a practical example, the reward signal in model-free reinforcement learning is an instantaneous attribute that naturally meets the two requirements.

Subsequent sections will provide detailed illustrations of two representative canonical data forms: temporal form and spatial form, offering a comprehensive introduction to their content and benefits. 

\subsection{Temporal canonical form}
When controlling a system from one state to another, different policies will generate distinct state trajectories. The time consumed in this process is a critical indicator of controller performance and holds promise for enhancing controller design. Here, we define the temporal attribute as the time cost between pre-determined states, also termed as events, as illustrated in Fig. \ref{fig.temporal_canonical_form}.

\begin{figure}[!htbp]
\centering
\includegraphics[width=0.35\textwidth,trim={0cm 0.5cm 0cm 0.5cm}, clip]{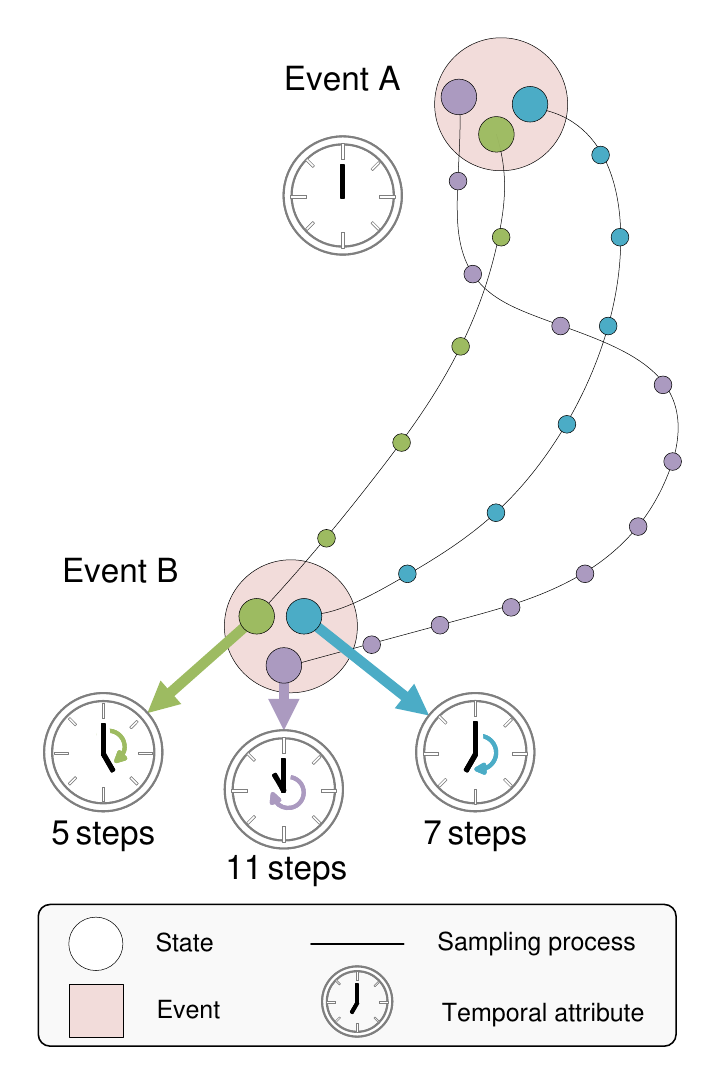}
\caption{Definition of temporal canonical data form.}
\label{fig.temporal_canonical_form}
\end{figure}

One can first identify the minimum time cost between two events and fit a piece-wise linear function between event and time, called event-time distribution. This distribution then serve as an additional source to guide better controller design.
For a more detailed illustration, let us consider a control task with two events as shown in Fig. \ref{fig.tem_2d}. There are two raw trajectories in blue and green colors generated in the sampling process, with two events highlighted in red bars. The blue trajectory triggers Event A twice in the first and third steps, and then triggers Event B at the sixth step. Therefore, for its sixth sample, its transition is denoted by a blue rectangle, and its temporal attribute is equal to 3 (steps), representing the minimum time cost from Event A to Event B along this trajectory. Similarly, the green trajectory triggers Event A at the second step and then triggers Event B twice at the fourth and seventh steps. Therefore, for its fourth sample, its temporal attribute is equal to 2 (steps) and its transition is denoted by a green rectangle.
After obtaining these event samples, the minimum time cost for each event pair is determined by identifying the lowest one. During the training process, when any event pair is triggered, the time cost of policy is compared with the fitted event-time distribution. And positive reward will be assigned if the policy spends less time, while negative punishments will be applied otherwise. This supplementary performance indicator is expected to alleviate instability issues in training a datatic controller, especially when dealing with sparse reward tasks.

\begin{figure}[!htbp]
\centering
{\includegraphics[width=0.4\textwidth]{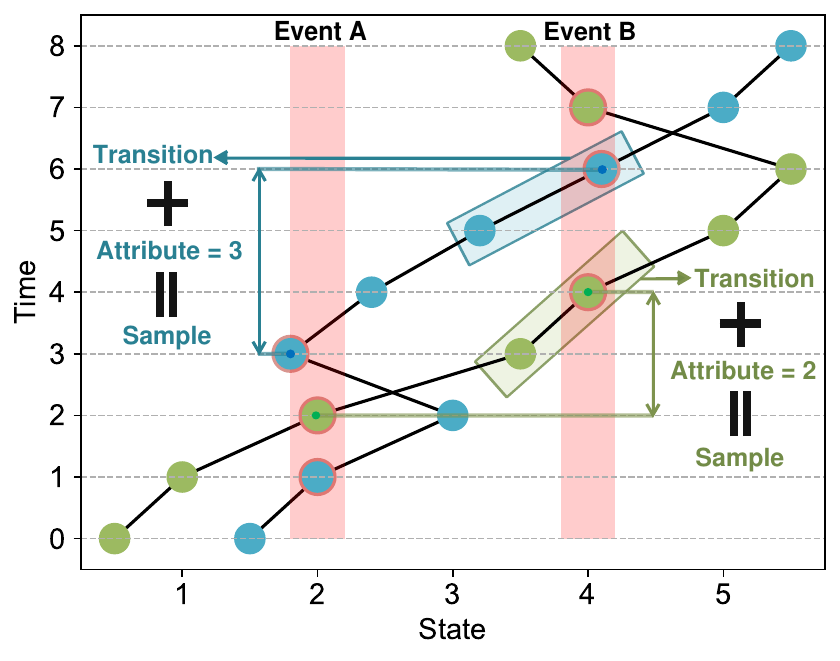}}
\caption{
Illustration of temporal canonical data form.}
\label{fig.tem_2d}
\end{figure}

\subsection{Spatial canonical form}

The samples generated by different policies have different degrees of spatial similarity. In many cases, neighboring samples around the selected one are useful for datatic controller design. Here, we set the spatial attribute as the distances to several pre-determined anchors as shown in Fig. \ref{fig.spatial_canonical_form}. This spatial attribute is helpful to accelerate the searching process for neighboring samples.

\begin{figure}[!htbp]
\centering
\includegraphics[width=0.35\textwidth,trim={0.0cm 0.5cm 0.0cm 0.5cm}, clip]{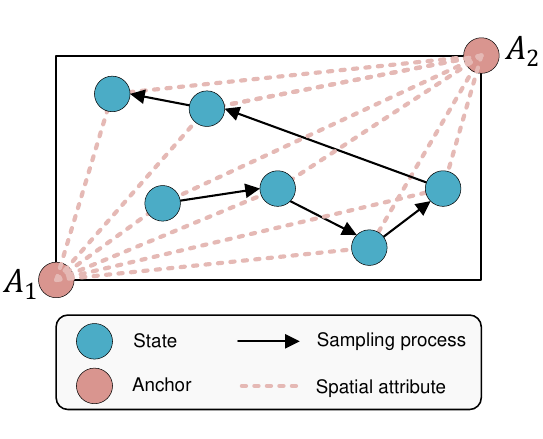}
\caption{Definition of spatial canonical data form.}
\label{fig.spatial_canonical_form}
\end{figure}

A representative searching task is the R-neighbor search illustrated in Fig. \ref{fig.anchor}. For a selected sample $C$ shown in green point, its R-neighbor area is a circle with a radius of $R$, shown in light green. The searching goal is to identify all samples within the R-neighbor area.
A trivial solution for this task is to traverse all the samples to calculate their distances to the selected sample, followed by a selection of those within the R-neighbor area.
Unlike this solution, we can utilize the saved spatial attribute to construct a filter condition,  whose purpose is to quickly reject most incompetent samples at first glance to accelerate the searching process. The mathematical principle underlying this filter condition is the triangle inequality, as demonstrated in the lemma below.
\begin{lemma}[Triangle inequality] The absolute difference between the lengths of any two sides of a triangle must be less than the length of the remaining third side. Mathematically, for a triangle with sides of lengths 
$a$, $b$, and 
$c$, they satisfy
\begin{equation}
    \Vert a - b\Vert  \leq c,  \Vert a - c\Vert  \leq b, \Vert b - c\Vert  \leq a.
\end{equation}
\label{triangle_inequality}
\end{lemma}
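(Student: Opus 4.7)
The plan is to obtain the three stated reverse inequalities as immediate corollaries of the forward (subadditive) triangle inequality for the Euclidean norm, since the sides of a triangle can be identified with the norms of edge vectors between its vertices. First I would set up the triangle as three points $P_1, P_2, P_3 \in \mathbb{R}^n$ and define $a = \Vert P_2 - P_3 \Vert$, $b = \Vert P_1 - P_3 \Vert$, $c = \Vert P_1 - P_2 \Vert$, so that each scalar side length is the norm of a well-defined difference vector. I would also clarify at the outset that the $\Vert \cdot \Vert$ appearing on the scalars $a,b,c$ in the statement is to be read as the ordinary absolute value, which is the only sensible interpretation for one-dimensional quantities.

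Next I would establish the forward inequality $c \leq a + b$ together with its two cyclic permutations. Using the decomposition $P_1 - P_2 = (P_1 - P_3) + (P_3 - P_2)$, this reduces to the subadditivity $\Vert u + v \Vert \leq \Vert u \Vert + \Vert v \Vert$, which in turn follows from expanding $\Vert u + v \Vert^2 = \Vert u \Vert^2 + 2 \langle u, v \rangle + \Vert v \Vert^2$ and bounding the inner product via Cauchy--Schwarz, $\langle u, v \rangle \leq \Vert u \Vert \, \Vert v \Vert$. Cyclically relabelling the vertices yields $a \leq b + c$ and $b \leq a + c$ without further work.

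The third step is a purely algebraic rearrangement into the reverse form. From $c \leq a + b$ I extract $c - b \leq a$, and from the cyclic permutation $a \leq b + c$ I extract $-(c-b) = b - c \leq a$; combining the two gives $\vert b - c \vert \leq a$. The remaining two inequalities $\vert a - c \vert \leq b$ and $\vert a - b \vert \leq c$ follow by the same pairing applied to the other cyclic versions of the forward inequality.

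There is essentially no obstacle here --- the statement is a textbook result and its genuine content is Cauchy--Schwarz, which can either be cited or established in one line. The only points deserving care are notational (making explicit that $\Vert \cdot \Vert$ on scalars is absolute value) and the acknowledgement that degenerate collinear configurations are admitted, in which case the inequalities hold with equality.
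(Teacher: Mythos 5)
Your proof is correct, but note that the paper itself offers no proof of this lemma at all: it is stated as a known geometric fact and immediately consumed in the proof of the spatial filter condition, where the paper forms a triangle from $A_iS$, $A_iC$ and $SC$ and invokes $\Vert A_iS - A_iC\Vert \leq SC$ directly. Your derivation therefore supplies something the paper leaves implicit, and it does so by the standard route: identify the side lengths with norms of edge vectors, prove subadditivity $\Vert u+v\Vert \leq \Vert u\Vert + \Vert v\Vert$ via Cauchy--Schwarz, and rearrange the three forward inequalities pairwise into the reverse form. One small bookkeeping slip: to conclude $b - c \leq a$ you invoke the permutation $a \leq b + c$, but that permutation yields $a - b \leq c$ and $a - c \leq b$; the inequality you actually need is $b \leq a + c$, which you do establish when you list all three cyclic versions, so the argument is complete once the correct permutation is cited. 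Your observation that the $\Vert\cdot\Vert$ on the scalars $a,b,c$ must be read as absolute value is also a genuine clarification of the paper's slightly abusive notation, as is the remark that degenerate collinear configurations give equality --- the paper's prose says ``must be less than'' while its displayed formula uses $\leq$, and your reading resolves that discrepancy in favor of the non-strict inequality.
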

Following this lemma, we can derive the spatial filter condition outlined below.
\begin{theorem}[Spatial filter condition]
Consider a dataset with $n$ anchors denoted as $A_1, ..., A_n$, let $C$ be a selected sample and $S$ be any other sample. A necessary condition for $S$ in the R-neighbor area of $C$, termed as spatial filter condition, is given by
    \begin{equation}
        \big(\Vert A_1 S-A_1 C\Vert\leq R\big) \ \land, ..., \land \ \big(\Vert A_n S-A_n C\Vert \leq R\big),
    \label{eq.spatial_filter_condition}
    \end{equation}
    where $\land$ denotes the logical AND operator.
\end{theorem}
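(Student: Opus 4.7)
The plan is to prove the theorem as a direct consequence of the triangle inequality (Lemma \ref{triangle_inequality}) applied once per anchor. Since the statement asserts only a necessary condition, the argument is one-directional: assuming $S$ belongs to the R-neighbor area of $C$, I must show that the conjunction on the right-hand side of \eqref{eq.spatial_filter_condition} holds.

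First I would unpack the hypothesis. By the definition of the R-neighbor area used earlier in the paper, $S$ lying in the R-neighbor of $C$ is equivalent to $\Vert CS\Vert \leq R$. I would then fix an arbitrary index $i \in \{1,\ldots,n\}$ and consider the (possibly degenerate) triangle with vertices $A_i$, $C$, $S$, whose three side lengths are $\Vert A_i C\Vert$, $\Vert A_i S\Vert$, and $\Vert CS\Vert$.

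Next I would invoke Lemma \ref{triangle_inequality} on this triangle with the identification $a = \Vert A_i S\Vert$, $b = \Vert A_i C\Vert$, and $c = \Vert CS\Vert$. This immediately yields $\big|\Vert A_i S\Vert - \Vert A_i C\Vert\big| \leq \Vert CS\Vert \leq R$. Because the index $i$ was chosen arbitrarily, the same bound holds simultaneously for every anchor $A_1,\ldots,A_n$, so the logical AND in \eqref{eq.spatial_filter_condition} is satisfied, completing the argument.

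I do not anticipate a genuine technical obstacle; the only subtle point is notational. The quantity $\Vert A_i S - A_i C\Vert$ in \eqref{eq.spatial_filter_condition} must be read as the absolute value of the difference of the two scalar edge lengths $\Vert A_i S\Vert$ and $\Vert A_i C\Vert$, matching the convention already used inside Lemma \ref{triangle_inequality}, rather than as the norm of a vector difference (the latter reading would collapse the inequality to the trivial $\Vert S - C\Vert \leq R$ and would defeat the purpose of precomputing anchor distances as spatial attributes). Once this interpretation is fixed, the theorem reduces to $n$ independent applications of the triangle inequality, and I would close by noting explicitly that the condition is only necessary, so no converse direction needs to be, nor in general can be, established.
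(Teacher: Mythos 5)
Your proof is correct and follows essentially the same route as the paper: one application of Lemma \ref{triangle_inequality} per anchor, giving $\big\Vert \,\Vert A_i S\Vert - \Vert A_i C\Vert\, \big\Vert \leq \Vert CS\Vert \leq R$ for each $i$, with the conjunction following because $i$ is arbitrary. If anything, your write-up is cleaner than the paper's: you state the key inequality in the correct direction ($\Vert A_iS-A_iC\Vert \leq SC$), whereas the paper's displayed step writes it reversed as $SC \leq \Vert A_iS-A_iC\Vert$ before nonetheless drawing the correct ``necessary condition'' conclusion, and your reading of $\Vert A_iS-A_iC\Vert$ as the absolute difference of the two scalar edge lengths matches the paper's intended convention.
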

\begin{proof}
    For any anchor $A_i$, let $A_i S$, $A_i C$ and $SC$ form a triangle. According to Lemma \ref{triangle_inequality}, we have
    \begin{equation}
        SC \leq \Vert A_i S-A_i C\Vert.
    \end{equation}
    
    Then the condition $\Vert A_i S - A_i C\Vert \leq R$ is a necessary condition for   
    $SC \leq R$, which indicates the sample point $S$ lies in the R-neighbor area of $C$. Leveraging every anchor $A_i, i=1, ..., n.$, the spatial filter condition  \eqref{eq.spatial_filter_condition} is constructed.
\end{proof}
Applying this condition allows us to efficiently filter out a substantial number of samples that are surely not within the R-neighbor area, resulting in significant time savings.

\begin{figure}[!htbp]
\centering
\includegraphics[width=0.4\textwidth,trim={0cm 0.0cm 0cm 0.0cm}, clip]{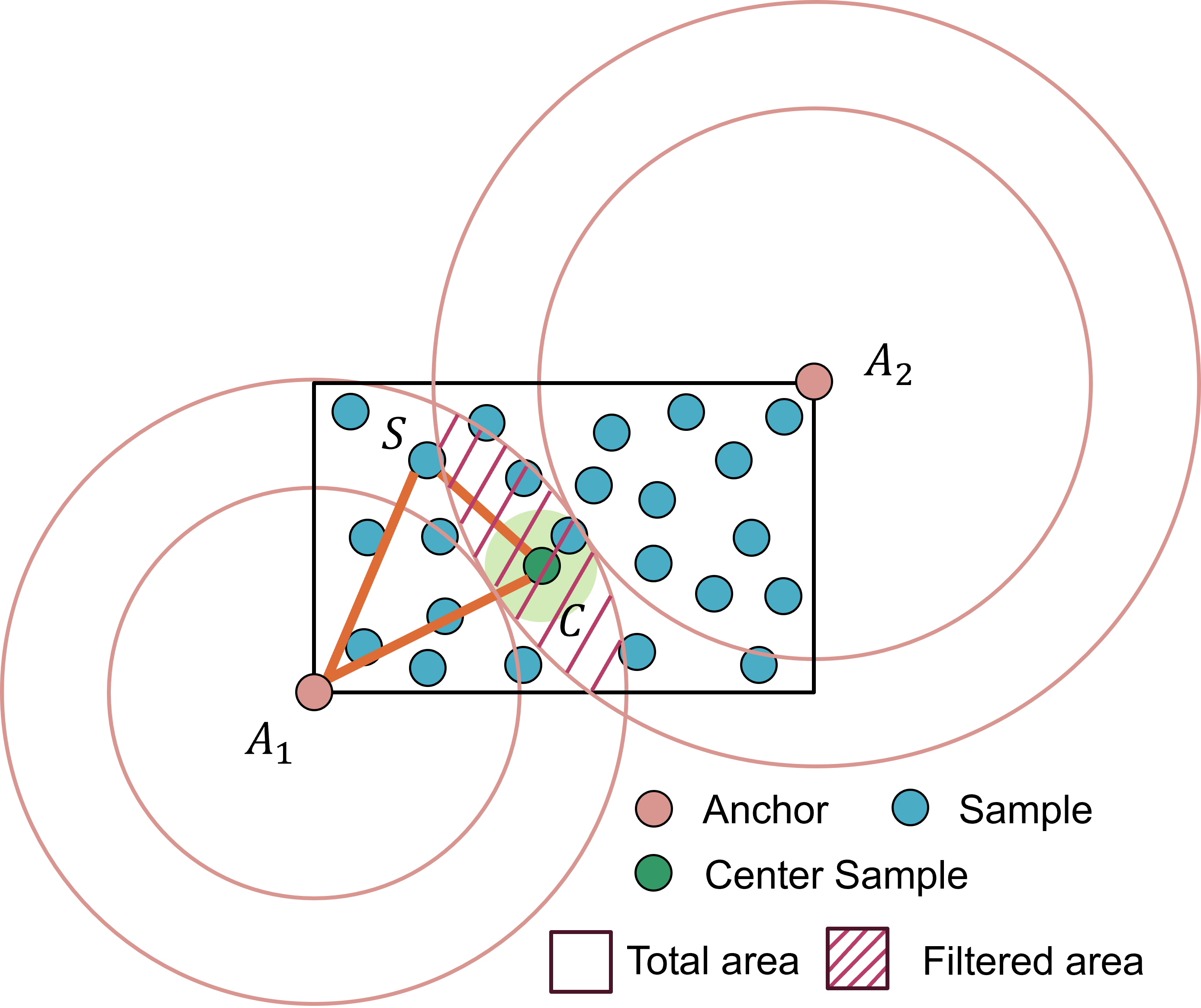}
\caption{Benefit of spatial canonical data form in R-neighbor search task. By applying the spatial filter condition constructed with two anchors $A_1$ and $A_2$, The filtered area represented by the shaded area is much smaller than the original total area.}
\label{fig.anchor}
\end{figure}
\section{Application}
This section validates the effectiveness of temporal and spatial canonical data by conducting experiments in two datatic control systems. One is controlling an underpowered car to reach the top of a mountain, and the other is controlling an one-legged hopper to move forward.

\subsection{Temporal canonical form}
\subsubsection{Task}
We select the MountainCar environment from Gym, as illustrated in Fig. \ref{fig.MountainCar task}, where an underpowered car is positioned between two mountains. The state $x \in \mathbb{R}^2$ comprises the car's horizontal position and velocity. The action $u \in \mathbb{R}$ is discrete, with alternatives of $[-1, 0, 1]$, representing left force, no force, and right force to drive the car, respectively.
The objective is to control the car to reach the top of the right mountain, with the evaluation metric being the time spent to achieve this target. The primary challenge arises from the underpowered nature of the car, preventing it from directly driving toward the destination. Specifically, the applied driving force is insufficient to overcome the climbing resistance force, necessitating a swinging motion between the two mountains until enough momentum is accumulated.

\begin{figure}[!htbp]
\centering
\includegraphics[width=0.35\textwidth,trim={0cm 2.0cm 0cm 1.4cm}, clip]{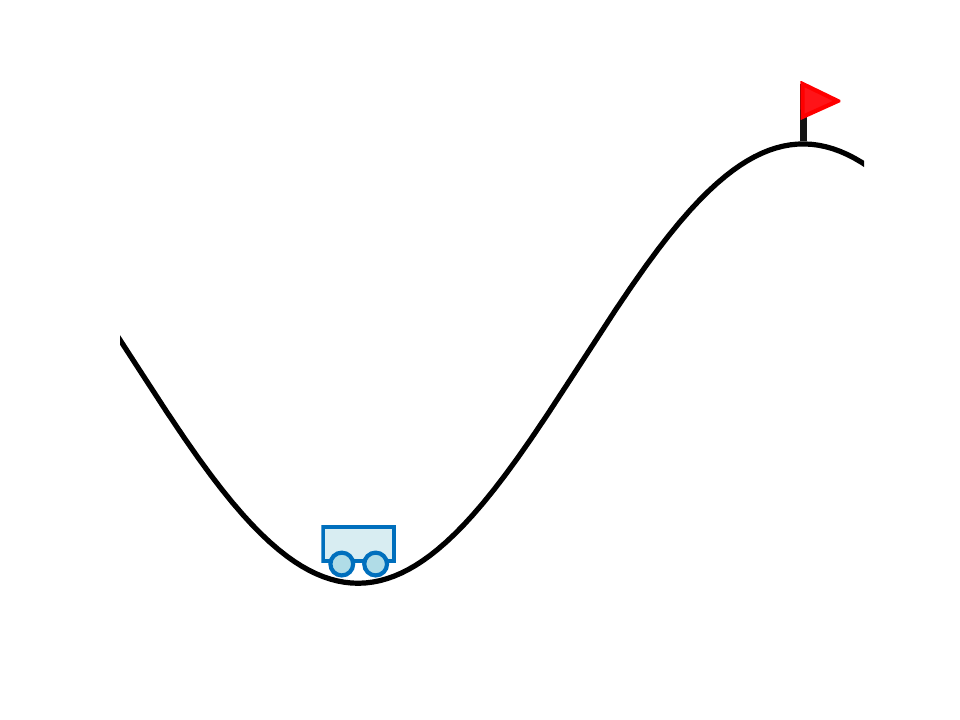}
\caption{MountainCar environment. The horizontal position of the initial bottom and the target flag at the right mountain are -0.5 and 0.5, respectively.}
\label{fig.MountainCar task}
\end{figure}

\subsubsection{Temporal attribute}
We define two events and record the time between them as the temporal attribute. The first event represents the starting situation, with the car initialized at the bottom, having a position of -0.5, and a velocity of 0. The second event corresponds to the halt situation, where the car comes to a stop due to insufficient momentum.
In each sampling process, the car is consistently initialized at the same starting bottom point. If any event corresponding to the halt situation is triggered, the time spent from the initial state will be recorded as the temporal attribute of the current sample. An illustrative example is presented in Fig. \ref{fig.temporal_attribute_in_sampling}.

\begin{figure}[!htbp]
\centering
\includegraphics[width=0.49\textwidth,trim={0.4cm 0.5cm 0.4cm 0.0cm}, clip]{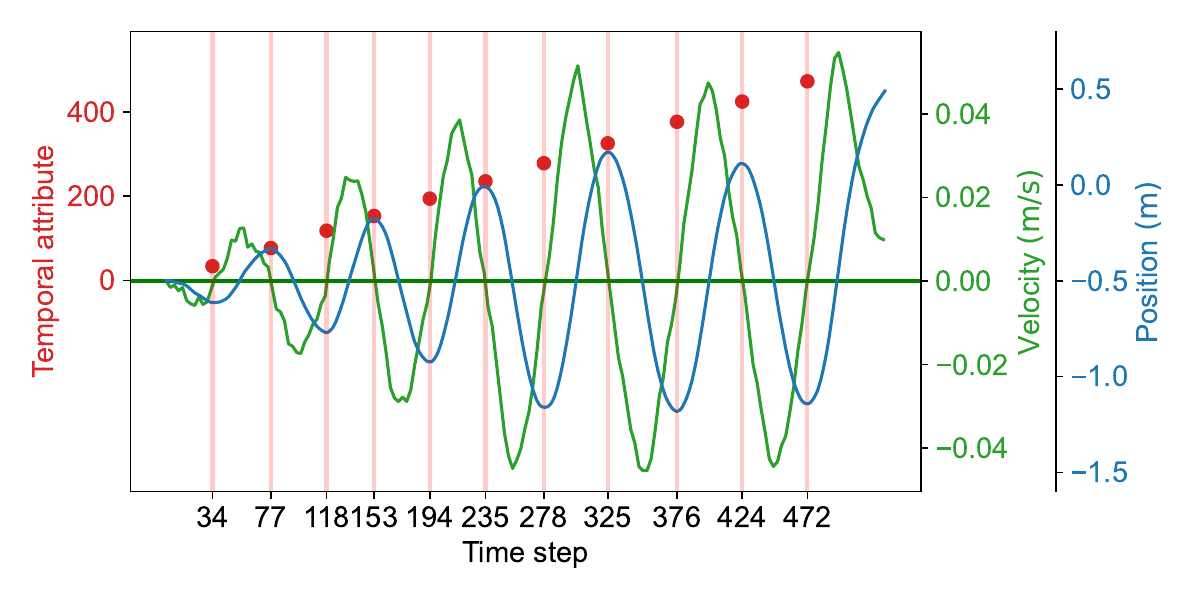}
\caption{Illustration of the temporal attribute in a specific sampling process.}
\label{fig.temporal_attribute_in_sampling}
\end{figure}

Utilizing the data described in temporal canonical form, we fit an event-time distribution, which represents the minimum time between the occurrence of the two events defined before. For the sampling process illustrated in Fig. \ref{fig.temporal_attribute_in_sampling}, there are 11 samples with temporal attributes. All of their velocities are near-zero, and the corresponding positions and temporal attributes are [-0.61, -0.33, -0.77, -0.18, -0.92, -0.01, -1.16, 0.17, -1.18, 0.11, -1.14], and [34, 77, 118, 153, 194, 235, 278, 325, 376, 424, 472], respectively. As shown in Fig. \ref{fig.temporal_fit0}, the horizontal axis is the position, and the vertical axis is the temporal attribute. A piece-wise linear function is then used to fit the inferior convex hull of these event samples, depicted as blue points. This fitted function serves as the event-time distribution, depicted by the green band.

\begin{figure}[!htbp]
\centering
\includegraphics[width=0.4\textwidth,trim={0.0cm 0.5cm 0.0cm 0.0cm}, clip]{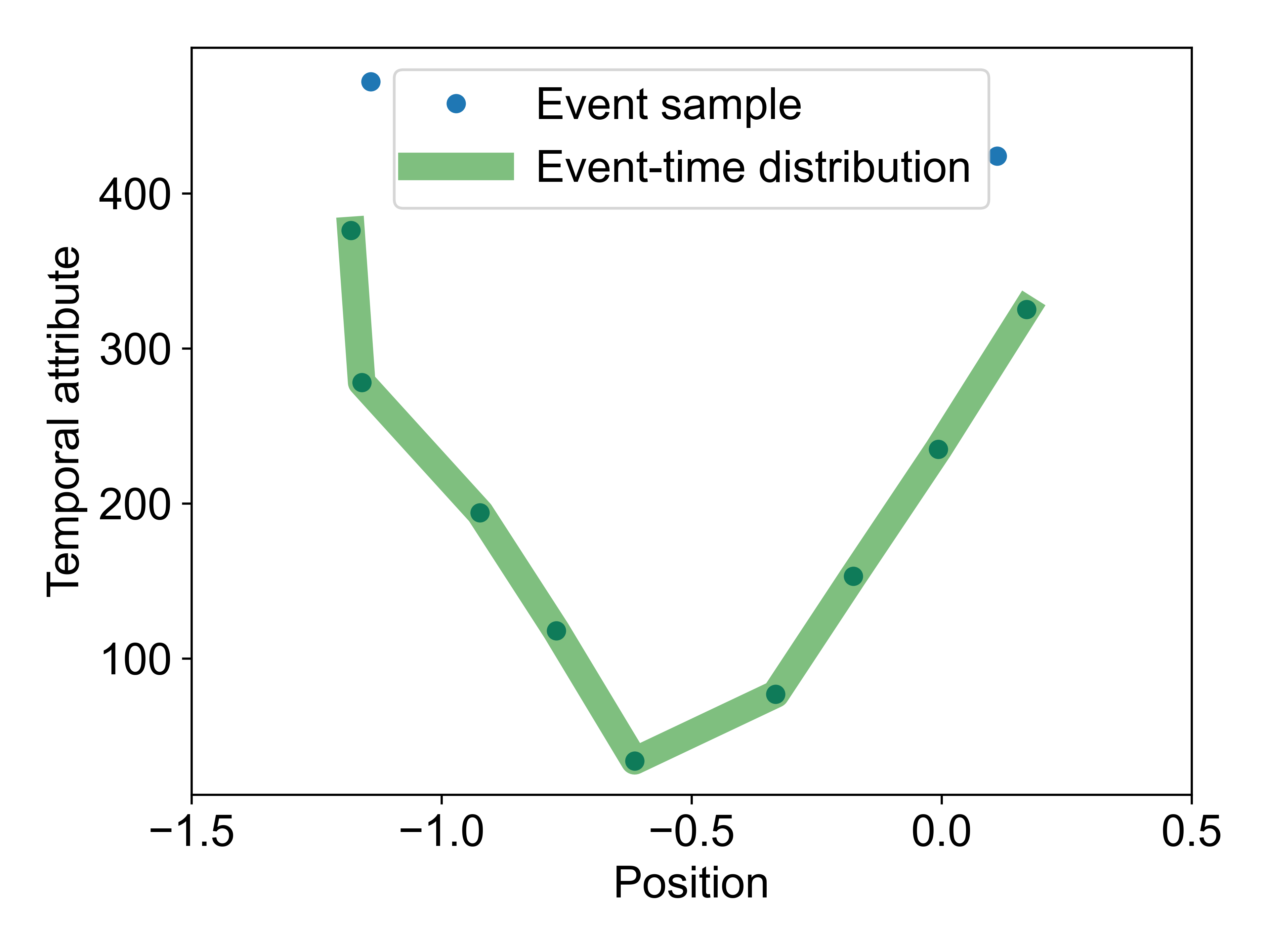}
\caption{An example of the fitted event-time distribution by the samples from a specific sampling process.}
\label{fig.temporal_fit0}
\end{figure}

\subsubsection{Algorithm}
We use the standard Deep Q Network (DQN) algorithm as the backbone \cite{wang2023gops}, and equip it with the fitted event-time distribution to serve as an additional performance measure. During each episode, when the car is initialized, the first event is triggered. Whenever the car subsequently triggers the second event, we compare the time consumed by the policy with the queried time cost from the event-time distribution. The difference between these values is then utilized as an extra reward signal, which is added into the original reward.

\subsubsection{Results}

Fig. \ref{fig.temporal_fit} illustrates the fitted event-time distribution of the collected samples in  replay buffer after 10000
iterations of training. The event samples with temporal attribute are represented as blue points, and the fitted inferior convex hull of these points is the event-time distribution, marked as the green band. This distribution indicates that the time cost increases as the car moves farther away from the bottom. 

\begin{figure}[!htbp]
\centering
\includegraphics[width=0.4\textwidth,trim={0.0cm 0.0cm 0.0cm 0.0cm}, clip]{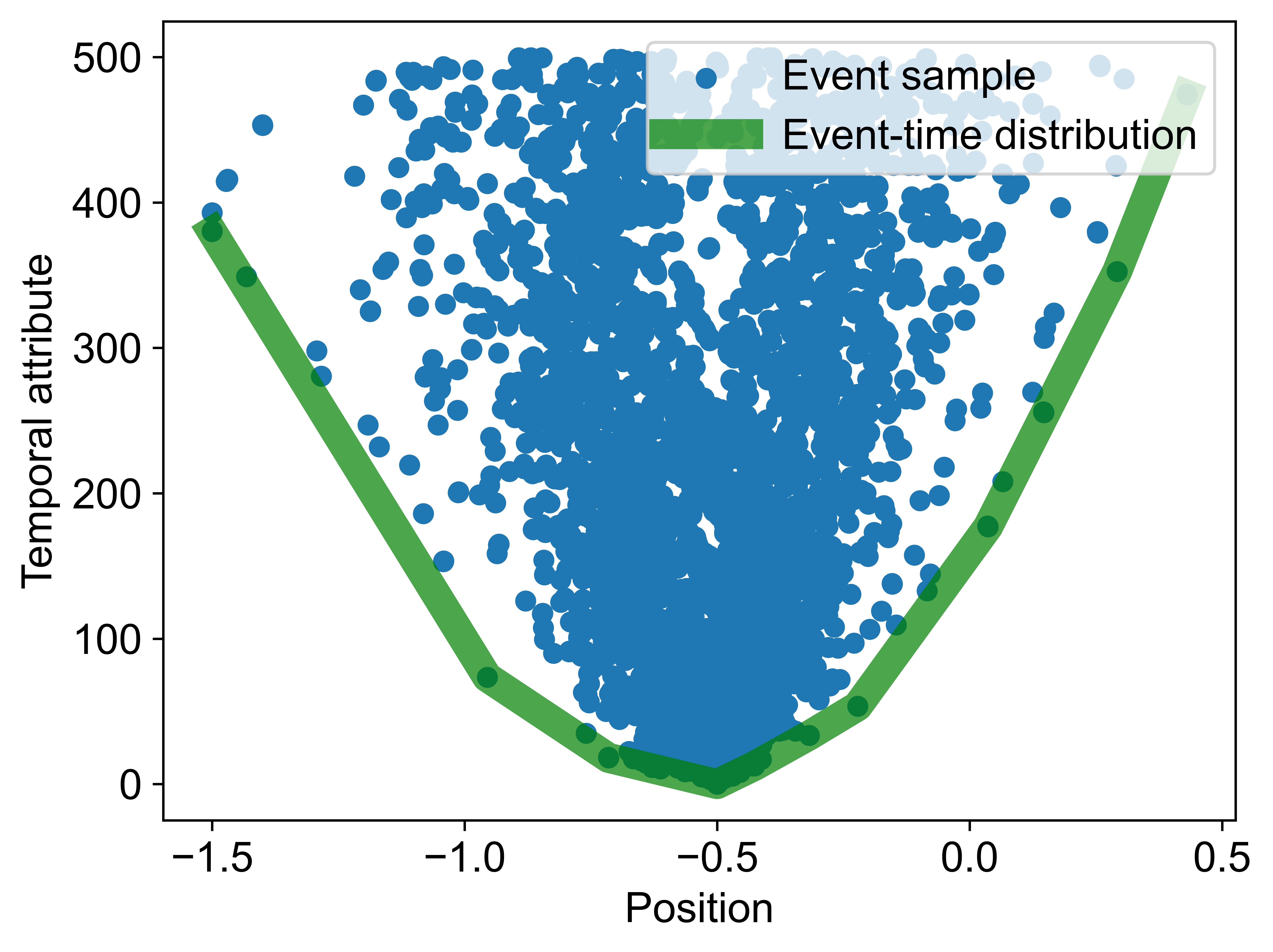}
\caption{The fitted event-time distribution for samples in the replay buffer after 10,000 iterations of training.}
\label{fig.temporal_fit}
\end{figure}

\begin{figure}[!htbp]
\centering
\hspace{25pt}
\includegraphics[width=0.4\textwidth,trim={0cm 0.8cm 0cm 0.8cm}, clip]{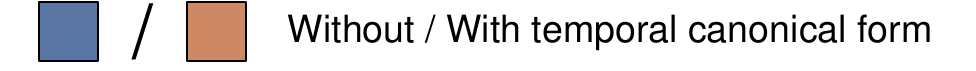}
\includegraphics[width=0.45\textwidth,trim={0cm 0.5cm 0cm 0.4cm}, clip]{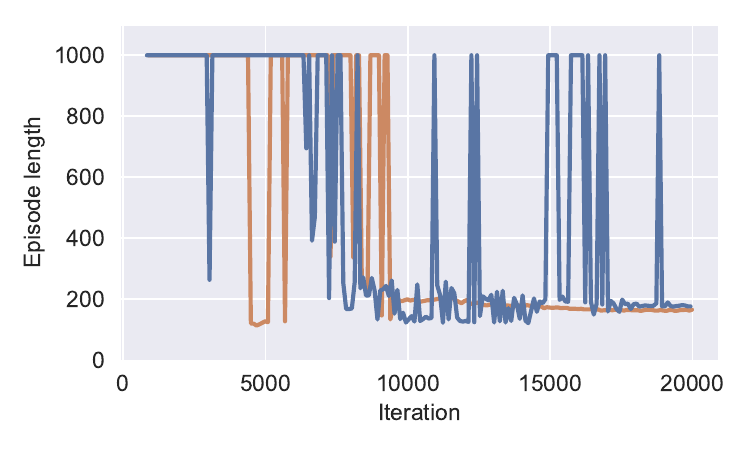}
\caption{Performance comparison during the training with/without temporal canonical data form. Here the vertical axis represents the episode length, denoting the time spent to achieve the target. We have set a maximum step limit of 1000, and the episodes whose length exceed this limit will be terminated.}
\label{fig.temporal_training}
\end{figure}

In Fig. \ref{fig.temporal_training}, the training processes across 20,000 iterations of two DQN methods are depicted. One incorporates the temporal canonical form, while the other does not. The blue curve represents the naive DQN without the temporal canonical form, displaying pronounced oscillations and inconsistent performance. In contrast, the orange curve corresponds to the DQN equipped with the temporal canonical form. Despite initial oscillations, it gradually comes to stabilization, achieving faster convergence and a modest performance improvement. This suggests that the temporal canonical form effectively mitigates instability issues, making the training process more efficient and reliable.

\subsection{Spatial canonical form}
\subsubsection{Task}
We choose the hopper environment in the MuJoCo locomotion benchmarks. The hopper is a two-dimensional one-legged robot that consist of four main body parts: the torso at the top, the thigh in the middle, the leg in the bottom, and a single foot on which the entire body rests. The goal is to make the hopper move in the forward (right) direction by applying torques on the three hinges connecting the four body parts. 
Specifically, the state $x\in\mathbb{R}^{11}$ consists of positional values of different body parts, followed by the velocities of those individual parts (their derivatives) with all the positions ordered before all the velocities, and the action $u\in\mathbb{R}^3$ is the torques applied on 3 hinges. The dataset we use is the ``hopper-medium-replay'' from the open-source D4RL repository. The selected dataset consists of 401,598 samples obtained from the replay buffer after 1 million iteration steps. The training algorithm is the Twin Delayed Deep Deterministic Policy Gradient (TD3) method.

\begin{figure}[!htbp]
\centering
\includegraphics[width=0.2\textwidth,trim={0cm 0cm 0cm 0cm}, clip]{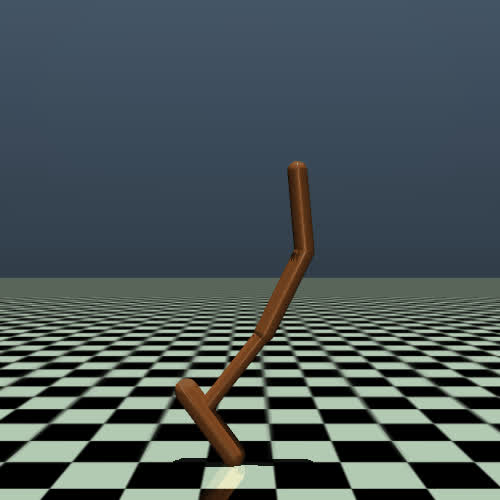}
\caption{Hopper task with 11 state dimensions and 3 action dimensions.}
\label{fig.hopper task}
\end{figure}

\subsubsection{Spatial attribute}

We first denote the unit vector as $e_i=[0, ..., 1, ..., 0]^\T$ whose all elements are 0 except the $i$-th element as 1. We consider both the state and action to define the feature distance as $(\beta x) \oplus u$, where $\oplus$ denotes the concatenation operation and $\beta$ is a hyper-parameter for trading off the impacts of state and action. Given that the sum of state and action dimensions is 14, we design 29 anchors, including an all-zero origin and the endpoints of the 14 axis of coordinates as
\begin{equation}
\textbf{0} \text{ and }
    \bigg[ e_i\times j \text{ for } i \in \{1, ..., 14\}  \text{ for }  j \in \{-1, 1\} \bigg].
\end{equation}

In each sampling process, the distances to these 29 anchors are recorded as the spatial attribute of the current sample. An illustrative example is presented in Fig. \ref{fig.spatial_attribute_in_sampling}. We select the 2nd state as the x-axis and the 6th state as the y-axis for a 2D projection. In this plane, only 5 anchors are left, which are  depicted as red circles. The sampled trajectory consists of a sequence of state denoted as points, with each point's color indicating the corresponding time step. The distances to all anchors, denoted as red dotted lines, are set as the spatial attribute of each sample.

\begin{figure}[!htbp]
\centering
\includegraphics[width=0.49\textwidth,trim={1.4cm 0.4cm 0.6cm 0.2cm}, clip]{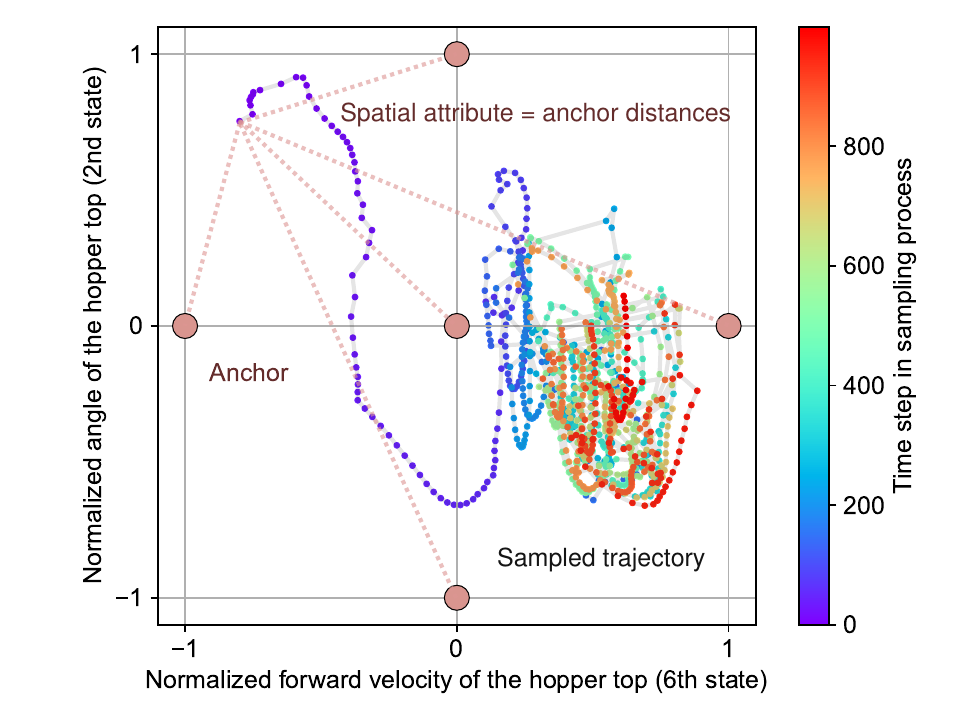}
\caption{Illustration of the spatial attribute in a specific sampling process.}
\label{fig.spatial_attribute_in_sampling}
\end{figure}

\subsubsection{Algorithm}
We employ the TD3 algorithm as the backbone, and enhance its resistance to out-of-distribution issue by incorporating dataset constraints. Specifically, for an arbitrarily selected sample with a transition denoted as $(x_c,u_c,x'_c)$, its R-Neighbor set $\mathcal{N} \subset \mathcal{D}$ is defined as
\begin{equation}
\begin{aligned}
    \mathcal{N} =\{(x,u,x')|
    \|(\beta x)\oplus u-(\beta x_c)\oplus u_c\| \leq R \},
    \label{eq.find_R_neighbor_set}
\end{aligned}
\end{equation}
where $R$ is the radius of the R-Neighbor area. And the point-to-dataset distance of the selected sample is given by
\begin{equation}
        d^\beta_\mathcal{N} = \text{mean}_{(x,u,x')\in\mathcal{N}}\|(\beta x)\oplus u-(\beta x_c)\oplus u_c\|.
\label{eq.point_to_set_distance}
\end{equation}

Based on the definition, we give the following dataset constraint loss as
\begin{equation}
\label{eq:dc_loss}
    \mathcal{L}_\text{DC}(\theta) = d^\beta_\mathcal{N}\left(x_c, \pi_\theta(x_c), x'\right),
\end{equation}
where $\theta$ denotes the learnable parameters of the policy $\pi_\theta$. 

Combining dataset constraint 
 in \eqref{eq:dc_loss} and the standard policy loss of  
 TD3, we derive the following total policy loss
\begin{equation}\label{eq:total_policy_loss}
    \mathcal{L}_{\text{Total}}(\theta) = \lambda \mathcal{L}_{\text{TD3}}(\theta) + \mathcal{L}_{\text{DC}}(\theta),
\end{equation}
where $\lambda$ is dynamically adjusted to balance maximizing  rewards and imitating dataset behaviors \cite{td3_bc}. 
Regarding the hyper-parameters related to the dataset constraint, we set $\beta$ as $0.2$ and $R$ as $0.5$. 

At each iteration, this algorithm needs to address a R-neighbor search task to calculate the dataset constraint loss. Therefore, utilizing the spatial filter condition provided by the spatial canonical form is expected to significantly enhance training efficiency, as it can filter out a substantial number of samples that are clearly not in the R-neighbor set.

\subsubsection{Results}
After 1000 iterations, the training processes of two TD3 algorithms are illustrated in Fig. \ref{fig.hopper_training_curve}, depicting their total average returns. One algorithm incorporates the spatial canonical form, while the other does not. The horizontal axis represents the wall time, indicating the real-world time spent. Notably, the utilization of spatial canonical form has significantly reduced the training time from over 20 hours to approximately 7 hours, resulting in a roughly threefold increase in training efficiency.

\begin{figure}[!htbp]
\centering
\includegraphics[width=0.48\textwidth,trim={0cm 0.4cm 0.cm 0.2cm}, clip]{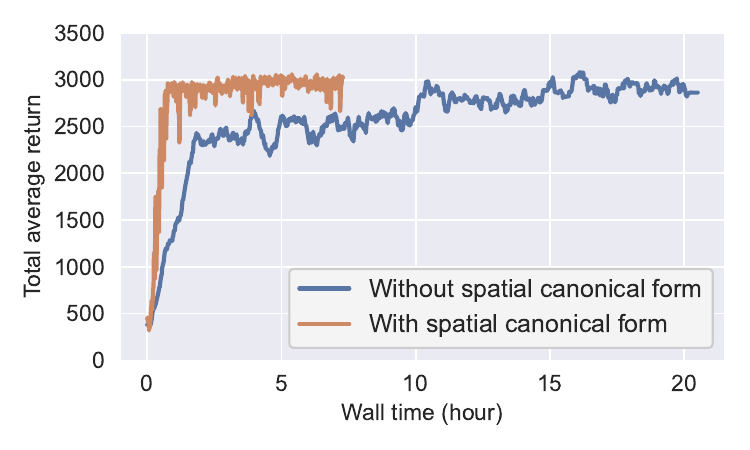}
\caption{Performance comparison during training process with/without spatial canonical form. Here, the vertical axis represents the total average return, calculated following the standard Mujoco benchmark. The horizontal axis indicates the wall time, i.e., the time consumed in the real-world. }
\label{fig.hopper_training_curve}
\end{figure}

To delve into the impact of spatial filter from a microscopic perspective, we randomly select 1000 samples from the training process and focus on the computation time comparison for R-Neighbor search tasks. 
We consider two algorithms: one is with our spatial canonical form and the other does not.
As depicted in Fig. \ref{fig.spatial_time}, the former,  with the help of spatial canonical form, reduces the computation time of each step to around $1 \mathrm{ms}$, while the latter incurs a much higher cost of approximately $20 \mathrm{ms}$.
Furthermore, we provide a visualization about the proportion of rejected data after applying the spatial filter as shown in Fig. \ref{fig.mask}. We define a metric named as reject ratio to evaluate the effective of spatial filter as
\begin{equation}
    \eta = 1 - \frac{N_\text{candidate}}{N_\text{total}},
\end{equation} 
where $N_\text{candidate}$ is the number of samples that satisfy \eqref{eq.spatial_filter_condition} and $N$ is the number of total samples. Higher reject ratio indicates better improvement in time efficiency.

We conduct 1000 tests using the randomly selected samples. It can be seen that for over 650 tests, the reject ratio is lager than $99.9\%$, and for over 950 tests, the reject ratio is lager than $99.5\%$. In other words, the spatial filter condition effectively reject a large proportion of samples.

\begin{figure}[!htbp]
\centering
\includegraphics[width=0.4\textwidth,trim={0cm 0.9cm 0cm 0.6cm}, clip]{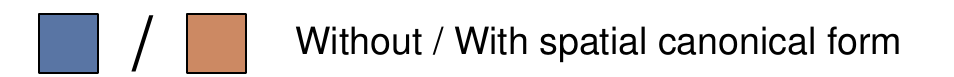}
\includegraphics[width=0.38\textwidth]{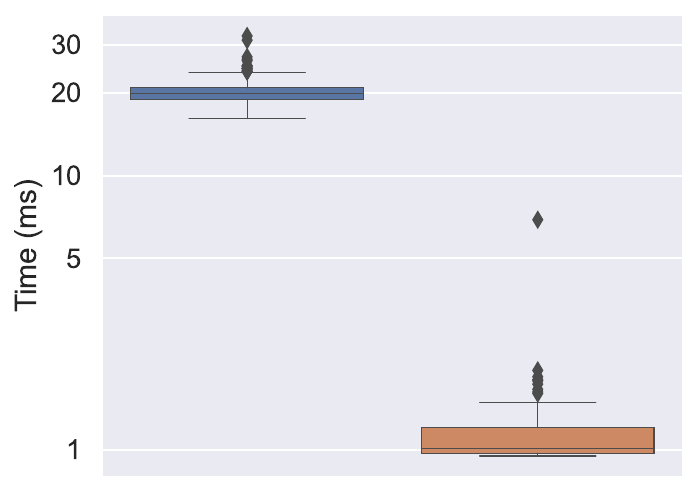}
\caption{The comparison of consumed time for accomplishing the R-neighbor search task is based on 1000 randomly selected samples, both without and with the spatial canonical form.}
\label{fig.spatial_time}
\end{figure}

\begin{figure}[!htbp]
\centering
\includegraphics[width=0.45\textwidth,trim={0cm 0.1cm 1.5cm 0.4cm}, clip]{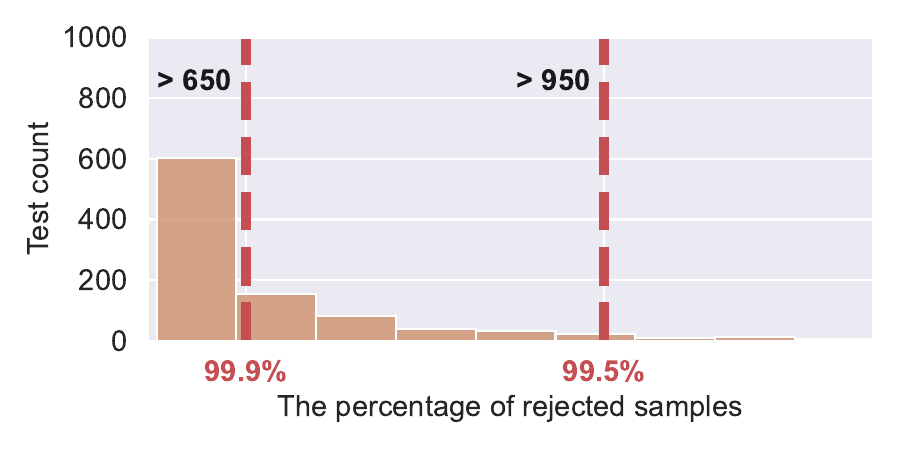}
\hspace{10pt}
\caption{Visualization of the effectiveness of spatial filter condition. The vertical axis is the test count, while the horizontal axis is the reject ratio, representing the percentage of samples rejected by the spatial filter.}
\label{fig.mask}
\end{figure}

\section{Conclusion}

This paper for the first time introduces the concept of \textit{canonical data form} into datatic control systems. In a datatic control system, the data sample in canonical form consists of a transition component and an attribute component. The former encapsulates the plant dynamics at the sampling time
independently. The latter describes one or some artificial characteristics of the current sample, whose calculation must be performed in an online manner. The attribute of each sample must adhere to two conditions: (1) causality, ensuring independence from any future samples; and (2) locality, allowing dependence on historical samples but constrained to a finite neighboring set. In our framework, different canonical forms can be customized according to specific needs to facilitate the development of datatic controllers.
Two representative canonical data forms, namely temporal form and spatial form, are presented as illustrations. This paper also provides a comprehensive introduction to their content and benefits in reducing instability and enhancing the training efficiency of datatic controller design.

\bibliographystyle{IEEEtran}
\bibliography{reference.bib}

\end{document}